\newtheorem{problem}{\textbf{Problem}}
\newtheorem{definition}{\rm\textbf{Definition}}
\newtheorem{theorem}{\rm\textbf{Theorem}}
\newtheorem{lemma}{\rm\textbf{Lemma}}
\newtheorem{assump}{\rm\textbf{Assumption}}
\newtheorem{remark}{\rm\textbf{Remark}}
\providecommand{\U}[1]{\protect\rule{.1in}{.1in}}
\begin{document}

\title{{\LARGE \textbf{High Order Control Lyapunov-Barrier Functions \\ for Temporal Logic Specifications}}}
\author{Wei Xiao, Calin A. Belta and Christos G. Cassandras \thanks{This work was supported in part by NSF under grants IIS-1723995, CPS-
1446151, ECCS-1509084, DMS-1664644, CNS-1645681, by AFOSR under
grant FA9550-19-1-0158, by ARPA-E’s NEXTCAR program under grant
DE-AR0000796 and by the MathWorks.}\thanks{The authors are
with the Division of Systems Engineering and Center for Information and
Systems Engineering, Boston University, Brookline, MA, 02446, USA
\texttt{{\small \{xiaowei\}@bu.edu}}}}
\maketitle

\begin{abstract}
Recent work has shown that stabilizing an affine control system to a desired state while optimizing a quadratic cost subject to state and control constraints can be reduced to a sequence of Quadratic Programs (QPs) by using Control Barrier Functions (CBFs) and Control Lyapunov Functions (CLFs). In our own recent work, we defined High Order CBFs (HOCBFs) for systems and  constraints with arbitrary relative degrees. In this paper, in order to accommodate initial states that do not satisfy the state constraints and constraints with arbitrary relative degree, we generalize HOCBFs to High Order Control Lyapunov-Barrier Functions (HOCLBFs). 
We also show that the proposed HOCLBFs 
%The HOCLBFs can render a set forward invariant if a system is %initially in this set, and a special class of the HOCLBFs can %drive a system inside a set within specified time if the system is %initially outside this set.  
can be used to guarantee the Boolean satisfaction of Signal Temporal Logic (STL) formulae over the state of the system. We illustrate our approach on a safety-critical optimal control problem (OCP) for a unicycle. \end{abstract}

\thispagestyle{empty} \pagestyle{empty}

%%%%%%%%%%%%%%%%%%%%%%%%%%%%%%%%%%%%%%%%%%%%%%%%%%%%%%%%%%%%%%%%%%%%%%%%%%%%%%%%

%%%%%%%%%%%%%%%%%%%%%%%%%%%%%%%%%%%%%%%%%%%%%%%%%%%%%%%%%%%%%%%%%%%%%%%%%%%%%%%%
\section{Introduction}
\label{sec:intro}

Barrier functions (BFs) are
Lyapunov-like functions \cite{Tee2009}, whose use can be traced back to optimization problems \cite{Boyd2004}. More recently, they have been employed to prove set invariance \cite{Aubin2009}, \cite{Prajna2007} and for the purpose of multi-objective control \cite{Panagou2013}. In \cite{Tee2009}, it was proved that if a BF for a given set satisfies Lyapunov-like conditions, then the set is forward invariant. A less restrictive form of a BF, which is allowed to grow when far away from the boundary of the set, was proposed in \cite{Ames2017}.
Another approach that allows a BF to take zero values was proposed in \cite{Glotfelter2017}, \cite{Lindemann2018}. 
%This simpler form has also been considered in time-varying cases and applied to enforce Signal Temporal Logic (STL) formulae as hard constraints \cite{Lindemann2018}.
Control BFs (CBFs) are extensions of BFs for control systems, and are used to map a constraint that is defined over system states to a constraint on the control input. Recently, it has been shown that, to stabilize an affine control system while optimizing a quadratic cost and satisfying state and control constraints, CBFs can be combined with
control Lyapunov functions (CLFs) \cite{Artstein1983}, \cite{Freeman1996}, \cite{Aaron2012} to form quadratic programs (QPs) \cite{Ames2017}, \cite{Glotfelter2017} that are solved in real time.

The CBFs from \cite{Ames2017} and \cite{Glotfelter2017} work for constraints that have relative degree one with respect to the system dynamics. A CBF method for position-based constraints with relative degree two was proposed in \cite{Wu2015}. A more general form, which works
for arbitrarily high relative degree constraints, was proposed in \cite{Nguyen2016}. The method in \cite{Nguyen2016} employs input-output linearization and finds a pole placement controller with negative poles to stabilize the CBF to zero. In our recent work \cite{Xiao2019}, we defined a High Order CBF (HOCBF) that can accommodate state constraints with high relative degree and does not require linearization. In this paper, we propose an extension of the HOCBF from 
\cite{Xiao2019} that achieves two main objectives: (1) it works for states that are not initially in the safe set, and (2) it can guarantee the satisfaction of specifications given as Signal Temporal Logic (STL) formulae. 

%Thus, this CBF is an exponential CBF. 
%The High Order CBF (HOCBF) that we proposed in \cite{Xiao2019} is simpler and more general than the exponential CBF from  \cite{Nguyen2016}. The CBFs mentioned above can be used to guarantee the satisfaction of state constraints with high relative degree. 

Recent works proposed the use of CBFs to enforce the satisfaction of temporal logic (TL) specifications. STL and Linear TL (LTL) were used as specification languages in \cite{Lindemann2018} and \cite{Nillson2018}, respectively, for systems and constraints with relative degree one. This is restrictive to rich specifications for high relative degree systems. For example, we have comfort (jerk) requirement on autonomous vehicles, and this can only be achieved on high relative degree systems. The authors of \cite{Lindemann2018} defined time-varying functions to guarantee the satisfaction of a STL formula for systems with relative degree one. Extending time-varying functions to work for high relative degree constraints, even though possible, would be difficult, as it would require that the state of the system be in the intersection of a possibly large number of sets. TL specifications have also been considered in \cite{Srinivasan2018} by using finite-time convergence CBFs \cite{Li2018}. However, this approach is restricted to relative-degree-one constraints, and may lead to chattering behaviors that result from finite-time convergence, as will be shown in this paper.  Barrier-Lyapunov functions, as proposed in  \cite{Tee2009},  \cite{Sachan2018}, could also be used, in principle, to implement STL specifications, as they combine (linear) state constraints with convergence. 

In this paper, to accommodate STL specifications over nonlinear state constraints for high relative degree systems, we propose High Order Control Lyapunov-Barrier Functions (HOCLBF). The proposed HOCLBFs lead to controllers that stabilize a system inside a set within a specified time if the system state is initially outside this set, and ensure that the system remains in this set after it enters it.  We also propose how to eliminate chattering behaviors with the HOCLBF method.
%In order to better accommodate the constraints mapped from STL specifications for a system, we divide HOCLBFs into two classes: {Class 1 HOCLBFs} that could reach the set boundary in specified time and {Class 2 HOCLBFs} that will never reach the set boundary. The {Class 2 HOCLBFs} are HOCBFs, and are good for set forward invariance. 
We illustrate the usefulness of the proposed approach by applying it to a unicycle model.

%This paper is structured as follows. In Sec. \ref{sec:pre}, we present preliminaries on  High Order CBFs (HOCBFs) and STL, and Sec. \ref{sec:prob},  we formulate an optimal control problem. We motivate and define HOCLBFs in Sec. \ref{sec:gcbf}. The case studies and simulation results are presented in Sec. \ref{sec:sim}. We conclude this paper in Sec. \ref{sec:conc}.

\section{Preliminaries}
\label{sec:pre}

 \begin{definition} \label{def:classk}
 	({\it Class $\mathcal{K}$ and Extended Class $\mathcal{K}$ functions}) \cite{Khalil2002}) A continuous function $\alpha:\mathbb{R}\rightarrow\mathbb{R}$ is an extended class $\mathcal{K}$ function if it is strictly increasing and $\alpha(0)=0$. If $\alpha:[0,a)\rightarrow[0,\infty), a > 0$, then $\alpha$ belongs to class $\mathcal{K}$. 
 \end{definition}
 
 \iffalse %begin comment
 
 \begin{definition}
 	({\it Extended class $\mathcal{K}$ function} \cite{Khalil2002}) A continuous function $\beta:\mathbb{R}\rightarrow\mathbb{R}$ is said to belong to extended class $\mathcal{K}$ if it is strictly increasing and $\beta(0)=0$. 
\end{definition}
 
 \fi % end comment

 Consider an affine control system of the form
 \begin{equation} \label{eqn:affine}
 \dot {\bm{x}} = f(\bm x) + g(\bm x)\bm u
 \end{equation}
 where  $\bm x\in \mathbb{R}^n$, $f:\mathbb{R}^n\rightarrow \mathbb{R}^n$ and $g:\mathbb{R}^n \rightarrow \mathbb{R}^{n\times q}$ are globally Lipschitz, and $\bm u\in U \subset \mathbb{R}^q$ ($U$ denotes the control constraint set). Solutions $\bm x(t)$ of (\ref{eqn:affine}), starting at $\bm x(0)$, $t\geq 0$, are forward complete for all $\bm u\in U$.
 
 Suppose the control bound $U$ is defined as (the inequality is interpreted componentwise, $\bm u_{min},\bm u_{max}\in \mathbb{R}^q$):
 \begin{equation} \label{eqn:control}
 U:= \{\bm u\in\mathbb{R}^q: \bm u_{min}\leq\bm u \leq \bm u_{max}\}.
 \end{equation}
 
 \subsection{High Order Control Barrier Functions}
 \label{sec:hocbf}
 \begin{definition} \label{def:forwardinv}
 	A set $C\subset\mathbb{R}^n$ is forward invariant for system (\ref{eqn:affine}) if its solutions starting at any $\bm x(0) \in C$ satisfy $\bm x(t)\in C$ for $\forall t\geq 0$.
 \end{definition} 
 
 \begin{definition} \label{def:relative}
 	({\it Relative degree})
 	The relative degree of a (sufficiently many times) differentiable function $b:\mathbb{R}^n\rightarrow \mathbb{R}$ with respect to system (\ref{eqn:affine}) is the number of times we need to differentiate it along its dynamics until $\bm u$ explicitly shows in the corresponding derivative for some $\bm x$. 
 \end{definition}
 
 In this paper, since function $b$ is used to define a constraint $b(\bm x)\geq 0$, we will also refer to the relative degree of $b$ as the relative degree of the constraint. 
 
 For a constraint $b(\bm x)\geq 0$ with relative degree $m$, $b: \mathbb{R}^n \rightarrow \mathbb{R}$, and $\psi_0(\bm x) := b(\bm x)$, we define a sequence of functions  $\psi_i: \mathbb{R}^n \rightarrow \mathbb{R}, i\in\{1,\dots,m\}$:
 \begin{equation} \label{eqn:functions}
 \begin{aligned}
 \psi_i(\bm x) := \dot \psi_{i-1}(\bm x) + \alpha_i(\psi_{i-1}(\bm x)),\;\;\;i\in\{1,\dots,m\},
 \end{aligned}
 \end{equation}
 where $\alpha_i(\cdot),i\in\{1,\dots,m\}$ denotes a $(m-i)^{th}$ order differentiable class $\mathcal{K}$ function.
 
 We further define a sequence of sets $C_i, i\in\{1,\dots,m\}$ associated with (\ref{eqn:functions}) in the form:
 \begin{equation} \label{eqn:sets}
 \begin{aligned}
 C_i := \{\bm x \in \mathbb{R}^n: \psi_{i-1}(\bm x) \geq 0\}, \;\;\;i\in\{1,\dots,m\}.
 \end{aligned}
 \end{equation}

 \begin{definition} \label{def:hocbf}
 	({\it High Order Control Barrier Function (HOCBF)} \cite{Xiao2019}) Let $C_1, \dots, C_{m}$ be defined by (\ref{eqn:sets}) and $\psi_1(\bm x), \dots, \psi_{m}(\bm x)$ be defined by (\ref{eqn:functions}). A function $b: \mathbb{R}^n\rightarrow \mathbb{R}$ is a high order control barrier function (HOCBF) of relative degree $m$ for system (\ref{eqn:affine}) if there exist $(m-i)^{th}$ order differentiable class $\mathcal{K}$ functions $\alpha_i,i\in\{1,\dots,m-1\}$  and a class $\mathcal{K}$ function $\alpha_{m}$ such that
 	{\small\begin{equation}\label{eqn:constraint}
 	\begin{aligned}
 	\sup_{\bm u\in U}[L_f^{m}b(\bm x) \!+\! L_gL_f^{m-1}b(\bm x)\bm u \!+\! S(b(\bm x)) \!+\! \alpha_m(\psi_{m-1}(\bm x))] \geq 0,
 	\end{aligned}
 	\end{equation}
 	}for all $\bm x\in C_1 \cap,\dots, \cap C_{m}$. 
 \end{definition}
  
  In (\ref{eqn:constraint}), $L_f^m$ ($L_g$) denotes Lie derivatives along $f$ ($g$) $m$ (one) times, $S(\cdot)$ denotes the remaining Lie derivatives along $f$ with degree $<m$ (omitted for simplicity, see \cite{Xiao2019}). Assume the number of $\bm x$ such that $L_gL_f^{m-1}b(\bm x) = 0$ is finite.
 
 \begin{theorem} \label{thm:hocbf}
 	(\cite{Xiao2019}) Given a HOCBF $b(\bm x)$ from Def. \ref{def:hocbf} with the associated sets $C_1, \dots, C_{m}$ defined by (\ref{eqn:sets}), if $\bm x(0) \in C_1 \cap,\dots,\cap C_{m}$, then any Lipschitz continuous controller $\bm u(t)$ that satisfies (\ref{eqn:constraint}), $\forall t\geq 0$, renders
 	$C_1\cap,\dots, \cap C_{m}$ forward invariant for system (\ref{eqn:affine}).
 \end{theorem}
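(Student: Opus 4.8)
The plan is to run a backward induction over the index $i$, from $i=m$ down to $i=1$, showing that each $\psi_{i-1}$ stays nonnegative along the closed-loop trajectory, and then to read off forward invariance from the nonnegativity of all the $\psi_{i-1}$. The first observation is that the bracketed expression in (\ref{eqn:constraint}) is exactly $\psi_m(\bm x)$: by Definition \ref{def:relative} the control $\bm u$ first appears in the $m$th derivative of $b$, so $\dot\psi_{m-1}(\bm x) = L_f^m b(\bm x) + L_g L_f^{m-1} b(\bm x)\bm u + S(b(\bm x))$, and adding $\alpha_m(\psi_{m-1}(\bm x))$ recovers $\psi_m$ via (\ref{eqn:functions}). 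Hence any Lipschitz controller $\bm u(t)$ satisfying (\ref{eqn:constraint}) forces $\psi_m(\bm x(t)) \ge 0$ for all $t \ge 0$ along the trajectory. This is the base of the induction.

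For the inductive step, suppose $\psi_i(\bm x(t)) \ge 0$ for all $t \ge 0$. By (\ref{eqn:functions}) we have $\psi_i = \dot\psi_{i-1} + \alpha_i(\psi_{i-1})$, so $\dot\psi_{i-1}(\bm x(t)) \ge -\alpha_i(\psi_{i-1}(\bm x(t)))$, a scalar differential inequality in $p(t) := \psi_{i-1}(\bm x(t))$. I would compare $p$ with the solution $y$ of the autonomous problem $\dot y = -\alpha_i(y)$, $y(0) = p(0) = \psi_{i-1}(\bm x(0))$. Because $\alpha_i$ is class $\mathcal{K}$ we have $\alpha_i(0) = 0$, so $y \equiv 0$ is an equilibrium; since $\bm x(0) \in C_1 \cap \dots \cap C_m$ gives $y(0) = \psi_{i-1}(\bm x(0)) \ge 0$, the solution $y$ cannot cross zero and stays nonnegative for all $t \ge 0$. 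The comparison lemma \cite{Khalil2002} then yields $p(t) \ge y(t) \ge 0$, i.e. $\psi_{i-1}(\bm x(t)) \ge 0$ for all $t \ge 0$.

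Applying this step for $i = m, m-1, \dots, 1$ in turn establishes $\psi_{i-1}(\bm x(t)) \ge 0$ for every $i \in \{1, \dots, m\}$ and all $t \ge 0$; by (\ref{eqn:sets}) this is precisely $\bm x(t) \in C_i$ for all $i$, so $\bm x(t) \in C_1 \cap \dots \cap C_m$ for all $t \ge 0$, which is exactly forward invariance in the sense of Definition \ref{def:forwardinv}.

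The step I expect to require the most care is the comparison argument, specifically the regularity needed to invoke the comparison lemma and to guarantee that the trajectory of $\dot y = -\alpha_i(y)$ remains nonnegative. The differentiability built into Definition \ref{def:hocbf} --- each $\alpha_i$ is $(m-i)$th-order differentiable for $i<m$ --- is exactly what makes $\psi_{i-1}$ differentiable and supplies local Lipschitz continuity of $\alpha_i$ on compact sets, so that the comparison system has a well-defined solution with the origin as equilibrium; the terminal function $\alpha_m$ is only required to be class $\mathcal{K}$, and here one either restricts to locally Lipschitz $\alpha_m$ or appeals to a comparison principle valid for merely continuous right-hand sides. A subsidiary point to verify is that differentiating $\psi_{i-1}$ along (\ref{eqn:affine}) introduces $\bm u$ only at level $i=m$, so that for $i<m$ the inequality $\dot\psi_{i-1} \ge -\alpha_i(\psi_{i-1})$ follows purely from $\psi_i \ge 0$ without any further assumption on the control.
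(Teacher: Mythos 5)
The paper states Theorem \ref{thm:hocbf} without proof, citing \cite{Xiao2019}; your argument --- identifying the bracket in (\ref{eqn:constraint}) with $\psi_m$, then inducting downward from $i=m$ to $i=1$ using the comparison lemma applied to $\dot\psi_{i-1}\geq-\alpha_i(\psi_{i-1})$ with $\psi_{i-1}(\bm x(0))\geq 0$ --- is correct and is essentially the standard proof given in that reference. Your caveat about the regularity of $\alpha_m$ (merely class $\mathcal{K}$ versus locally Lipschitz, as needed for the comparison lemma) is a genuine fine point that the cited treatment also handles by restricting to locally Lipschitz class $\mathcal{K}$ functions; nothing further is needed.
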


The HOCBF is a general form of the relative degree one CBF \cite{Ames2017}, \cite{Glotfelter2017}, \cite{Lindemann2018} (setting $m = 1$ reduces the HOCBF to the common CBF form in \cite{Ames2017}, \cite{Glotfelter2017}, \cite{Lindemann2018}). In order to accomodate initial conditions $\bm x(0)$ that are not in $C_1$, the extended class $\mathcal{K}$ functions are used in the definition of a relative degree one CBF \cite{Xu2015}, \cite{Ames2017}. In this way, a system will be assymptotically stabilized to a safe set that is defined by a safety constraint if the system is initially outside this set, but this may not work for high relative degree constraints, as will be shown in the next section. The HOCBF is also a general form of the exponential CBF \cite{Nguyen2016}. 

For system (\ref{eqn:affine}), consider the following cost:
\begin{equation}\label{eqn:cost}
J(\bm u(t)) = \int_{0}^{T}\mathcal{C}(||\bm u(t)||)dt
\end{equation}
where $||\cdot||$ denotes the 2-norm of a vector, and  $\mathcal{C}(\cdot)$ is a strictly increasing function.

\begin{problem}[Optimal Control Problem (OCP)] \label{prob:tocp}
Given system (\ref{eqn:affine}) with initial condition $\bm x(0)$,
find a control law that minimizes cost (\ref{eqn:cost}), while satisfying the control bounds (\ref{eqn:control}) and a constraint $b(\bm x)\geq 0$, for all $t\in [0,T]$.
\end{problem}  

Under the assumption that the cost
(\ref{eqn:cost}) is quadratic, the above OCP can be (conservatively) reduced to a sequence of quadratic programs (QP),
by discretizing the time, keeping the state constant at its value at the beginning of each interval, and solving for a constant optimal control in each interval (note that constraint (\ref{eqn:constraint}) is linear in control when the state is constant). Most existing approaches use a simpler form of (\ref{eqn:constraint}), which corresponds to a constraint of relative degree 1 \cite{Ames2017}, \cite{Lindemann2018}, \cite{Nguyen2016}. HOCBFs are used for arbitrary relative degree constraints in \cite{Xiao2019}. To guarantee the QP feasibility, we can use the analytical approach \cite{Xiao2021}, learning methods \cite{Xiao2020CDC}  or adaptive CBF methods \cite{Xiao2020}.

\subsection{Signal Temporal Logic (STL)}
\label{sec:stl}
In this paper, we use the negation-free signal temporal logic (STL) \cite{Maler2004} to specify regions of interest to be reached by the states of system (\ref{eqn:affine}). Its syntax is given by
\begin{equation} \label{eqn:task}
\varphi := \mu\vert\varphi_1\wedge \varphi_2\;\vert\; \varphi_1\vee \varphi_2\vert \varphi_1\!\Rightarrow\! \varphi_2 \vert\mathcal{G}_{I}\varphi \;\vert\; \mathcal{F}_{I}\varphi\; \vert\; \varphi_1\mathcal{U}_{I}\varphi_2
\end{equation}
where $\mu :=(b(\bm x)\geq 0)$ is a predicate on the state vector $\bm x$ of (\ref{eqn:affine}) and $b:\mathbb{R}^n\rightarrow \mathbb{R}$ is a differentiable function of relative degree $m$ with respect to system (\ref{eqn:affine}). $\wedge, \vee$ are Boolean operators for conjunction and disjunction, respectively; $\Rightarrow$ is the Boolean implication, and $\mathcal{U}_{I},  \mathcal{F}_{I}, \mathcal{G}_{I}$ are the timed ``until'', ``eventually'', and ``always'' operators, respectively, and $I=[t_a,t_b]$ is a time interval, $t_b\geq t_a\geq 0$. Note that the absence of negation does not restrict the expressivity of STL \cite{Quak2008}. Let $(\bm x, t)$ denote the trajectory $\bm x$ of (\ref{eqn:affine}) starting at time $t$. The (Boolean) semantics of STL are inductively defined as
\begin{equation} \label{eqn:semantics}
\begin{aligned}
(\bm x,t) \models \mu   &\Leftrightarrow  b(\bm x(t))\geq 0\\
(\bm x,t) \models \varphi_1 \wedge \varphi_2   &\Leftrightarrow (\bm x,t) \models \varphi_1\wedge (\bm x,t) \models \varphi_2\\
(\bm x,t) \models \varphi_1 \vee \varphi_2   &\Leftrightarrow (\bm x,t) \models \varphi_1\vee (\bm x,t) \models \varphi_2\\
(\bm x,t) \models \varphi_1 \mathcal{U}_{I}\varphi_2  &\Leftrightarrow \exists t^{'} \in t + I \text{ s.t. } (\bm x,t^{'})\models \varphi_2 \\ &\;\;\;\;\;\wedge \forall t^{''}\in[t_a,t^{'}], (\bm x,t^{''})\models \varphi_1 \\
(\bm x,t) \models \mathcal{F}_{I}\varphi  &\Leftrightarrow\exists t^{'}\in t+I \text{ s.t. } (\bm x,t^{'})\models\varphi \\
(\bm x,t) \models \mathcal{G}_{I}\varphi  &\Leftrightarrow\forall t^{'}\in t+I, (\bm x,t^{'})\models\varphi\\
(\bm x,t) \models (\varphi_1 \Rightarrow\varphi_2)  &\Leftrightarrow(\bm x,t)\! \models\! \varphi_1 \text{ implies } (\bm x,t)\! \models\! \varphi_2
\end{aligned}
\end{equation}
where $\models$ denotes the satisfaction relation, and $t+I$ denotes $[t+t_a,t+t_b]$. We say that a trajectory $\bm x$ satisfies a formula $\varphi$ if it is satisfied at time 0, i.e., $(\bm x, 0)\models \varphi$. For simplicity, we will write $\bm x\models \varphi$ to denote that $\bm x$ satisfies $\varphi$.

\textbf{Example:} Consider a unicycle model:
\begin{equation}\label{eqn:robot}
\left[\begin{array}{c} 
\dot x\\
\dot y\\
\dot \theta
\end{array} \right]=
\left[\begin{array}{c}  
v\cos(\theta)\\
v\sin(\theta)\\
0 \\
\end{array} \right]  + 
\left[\begin{array}{c}  
0\\
0\\
1
\end{array} \right] 
u
\end{equation}
where $(x,y)$ denote the coordinates of the robot, $v > 0$ denotes its linear speed, $\theta$ is its heading angle, and $u$ denotes its control (angular speed). 

Formula $\varphi_1:= \mathcal{G}_{[5,6]}(x^2(t) + y^2(t) \leq R^2)$, $R > 0$, requires the robot to satisfy the constraint $x^2(t) + y^2(t) \leq R^2$ for all times in $[5s, 6s] $. Formula $\varphi_2:= \mathcal{F}_{[5,6]}(x^2(t) + y^2(t) \leq R^2)$, $R  > 0$, requires the robot to satisfy the constraint $x^2(t) + y^2(t) \leq R^2$ for at least a time instant in $[5s,6s]$.

\section{Problem Formulation and Approach}
\label{sec:prob}
 In this paper, we consider the following problem:

%\begin{equation} \label{eqn:stl}
%\bm x\models \varphi,
%\end{equation}
%where $\varphi$ is defined as in (\ref{eqn:task}).

\begin{problem}[OCP with STL constraints]\label{prob:pre}
Given system (\ref{eqn:affine}) with initial state $\bm x(0)$, and given a STL formula $\varphi$ over its state $\bm x$, 
find a control law that minimizes cost (\ref{eqn:cost}), while satisfying the control bounds (\ref{eqn:control}) and formula $\varphi$. 
\end{problem}

Assume the STL formula $\varphi$ can be satisfied for some controllers. In the case that it cannot be satisfied, we explore to maximally satisfy it, i.e., to maximize the STL robustness. This will be further studied in future work.

Our approach to Problem \ref{prob:pre} is based on two types of 
HOCLBF (class 1 and class 2, shown in the next section) and it can be summarized as follows. First, by exploiting the negation-free structure of formula $\varphi$, we break it down (assume it is tractable) into a set of atomic formulae
of the type $\mathcal{G}_{[t_a,t_b]}(b(\bm x(t))\geq 0)$ and $\mathcal{F}_{[t_a,t_b]}(b(\bm x(t))\geq 0)$. Starting from time $t=0$, we use a receding horizon $H > 0$ to determine the atomic formulae that we will consider at time $t$, i.e., we only consider the atomic formulae such that $[t,t+H]\cap[t_a,t_b]\ne \emptyset$ (the choice of $H$ is discussed 
at the end of Sec. \ref{sec:gcbf}).
%{\color{blue} $H$ is chosen as large as possible if the computation resource is allowed. If the problem becomes infeasible at some time $t\geq 0$ due to conflict of atomic formulae, we relax the satisfaction of the atomic formula whose $t_a$ is larger.} 
For each predicate involved in these formulae, we define a HOCLBF (we discuss  later how to address possible conflicts among these predicates). If the current state satisfies a predicate $b(\bm x(t))\geq 0$ (the predicate most likely corresponds to a safety requirement), then we use a class 2 HOCLBF, which is a HOCBF as defined in our previous work \cite{Xiao2019}, to derive a controller that makes sure the predicate stays true for all future times.  If the current state does not satisfy the predicate (usually related to a state convergence requirement), we use a class 1 HOCLBF that makes sure the system satisfies the predicate before $t_b$ for atomic formulae with $\mathcal{F}_{[t_a,t_b]}$, and before $t_a$ for atomic formulae with $\mathcal{G}_{[t_a,t_b]}$. Once the predicate is satisfied, we switch to a class 2 HOCLBF. We show how the satisfaction of general STL formulae can be enforced with such class 1 and class 2 HOCLBFs.

%For system (\ref{eqn:affine}) with arbitrary initial condition $\bm x(0)$, we break the STL formula $\varphi$ into atomic formulae, and get a constraint mapped from each atomic formula. This constraint could be satisfied or violated at $\bm x(0)$. To accommodate these initially violated and satisfied constraints from a STL formula, we divide the proposed HOCLBFs into two classes to implement them, respectively. Eventually, we use the QP-based approach introduced at the end of Sec. \ref{sec:hocbf} to solve the optimal control problem with (\ref{eqn:cost}) as the cost subject to the corresponding HOCLBF constraints and control bounds (\ref{eqn:control}).

\section{High Order Control Lyapunov-Barrier Functions}
\label{sec:gcbf}

In this section, we define high order control Lyapunov-barrier functions (HOCLBFs) for system (\ref{eqn:affine}), and classify them into two classes to accommodate systems with arbitrary initial states.

\textbf{Example revisited:} Consider the robot from the previous example and formula $\varphi_1$, which requires the satisfaction of constraint $x^2(t) + y^2(t) \leq R^2$ for all times in $[5s,6s]$.
This constraint has relative degree 2 for system (\ref{eqn:robot}). If this constraint is satisfied at time 0, then we can define a HOCBF $b(\bm x):= R^2 - x^2(t) - y^2(t)$ such that $\varphi_1$ is guaranteed to be satisfied if a controller $u$ satisfies the corresponding HOCBF constraint (\ref{eqn:constraint}). Otherwise, we cannot define a HOCBF for it since $b(\bm x(0)) < 0$ and the class $\mathcal{K}$ function $\alpha_1(\cdot)$ in (\ref{eqn:functions}) only allows for a non-negative argument. Thus, it is impossible to construct the corresponding sets $C_1, C_2$.

 If $b(\bm x(0)) < 0$ and $\dot b(\bm x(0)) > 0$, we can then redefine $\psi_i(\bm x)$ ($ i\in\{1,2\}$ in this case) in (\ref{eqn:functions}) as:
 \begin{equation}\label{eqn:example}
\begin{aligned}
\psi_1(\bm x):= \dot\psi_0(\bm x) + p_1\beta_1(\psi_0(\bm x)),\\
\psi_2(\bm x):= \dot\psi_1(\bm x) + \alpha_2(\psi_1(\bm x)),\end{aligned}
\end{equation}
 where $\psi_0(\bm x)= b(\bm x), p_1 > 0$. $\beta_1(\cdot)$ and $\alpha_2(\cdot)$ are extended class $\mathcal{K}$ (e.g., $\beta_1(\psi_0(\bm x)) = \psi_0^3(\bm x)$) and class $\mathcal{K}$ (e.g., $\alpha_2(\psi_1(\bm x)) = \psi_1^2(\bm x)$) functions, respectively.
 Since $\dot b(\bm x(0)) > 0$ and $ b(\bm x(0)) < 0$, we can always choose a small enough $p_1$ such that $\psi_1(\bm x(0)) \geq 0$ in (\ref{eqn:example}). The HOCBF constraint (\ref{eqn:constraint}) is the Lie derivative form of $\psi_2(\bm x)\geq 0$ in this case. It follows from Thm. \ref{thm:hocbf} that $\psi_1(\bm x(t))\geq 0, \forall t\geq 0$ if a controller satisfies the corresponding HOCBF constraint (\ref{eqn:constraint}). Because $\beta_1(\cdot)$ is an extended class $\mathcal{K}$ function in (\ref{eqn:example}), the robot will be asymptotically stabilized to the set $C_1:=\{\bm x:b(\bm x)\geq 0\}$, but it will never reach the set boundary in finite time, i.e., the STL specification $\varphi_1$ cannot be satisfied. If both $b(\bm x(0)) < 0$ and $\psi_1(\bm x(0)) < 0$, the HOCBF fails to work since $\psi_1(\bm x)\geq 0$ is not guaranteed to be satisfied in finite time. Since $\psi_1(\bm x)\geq 0$ is equivalent to $\dot \psi_0(\bm x)+ \alpha_1(\psi_0(\bm x))\geq0$ by (\ref{eqn:functions}), we have that the original constraint $b(\bm x)\geq 0$ is also not guaranteed to be satisfied.
We explore how to solve this problem in the next section.

\subsection{High Order Control Lyapunov-Barrier Function}

We introduce HOCLBFs that stabilize a system to a set\footnote{For simplicity, throughout the paper, we say that a system is {\em stabilized to a set} if, when initialized outside the set, it reaches the set in finite time and then it stays inside the set for all future times.}  
defined by $b(\bm x)\geq 0$ whose relative degree is $m$ w. r. t. system (\ref{eqn:affine}). Similar to (\ref{eqn:functions}), we define a sequence of functions:
\begin{equation} \label{eqn:pfunc}
\psi_i(\bm x) := \dot \psi_{i-1}(\bm x) + p_i\beta_i(\psi_{i-1}(\bm x)),\;\;i\in\{1,\dots,m\},
\end{equation}
where $\psi_0(\bm x) := b(\bm x)$ and $p_i \geq 0$. $\beta_i(\cdot),  i\in\{1,\dots,m\}$ are extended class $\mathcal{K}$ functions. 

We also define a sequence of sets as in (\ref{eqn:sets}).  Note that $\bm x(0)\in C_1$ means that system (\ref{eqn:affine}) is initially in the set defined by the constraint $b(\bm x)\geq 0$. If $b(\bm x(0))> 0$, we can always construct a non-empty set $C_1 \cap,\dots, \cap C_{m}$ at time 0 by choosing proper class $\mathcal{K}$ functions in the definition of a HOCBF. Otherwise, there are only some extreme cases (such as $b(\bm x(0))=0$ and $\dot b(\bm x(0)) > 0$) in which we can construct a non-empty set $C_1 \cap,\dots, \cap C_{m}$, as discussed in \cite{Xiao2019}. If we cannot construct such a non-empty set at time 0, we construct $C_1$ as in (\ref{eqn:sets}), and construct sets $C_i, i\in\{2,\dots,m\}$ by (\ref{eqn:pfunc}) and (\ref{eqn:sets}) such that $\bm x(0) \notin C_1 \cap,\dots, \cap C_{m}$. Then we define a HOCLBF as follows:

\begin{definition} \label{def:HOCLBF}
	({\it High Order Control Lyapunov-barrier Function (HOCLBF)}) Let $C_1,\dots, C_{m}$ be defined by (\ref{eqn:sets}) and $\psi_1(\bm x), \dots, \psi_{m}(\bm x)$ be defined by (\ref{eqn:pfunc}). A function $b: \mathbb{R}^n\rightarrow \mathbb{R}$ is a HOCLBF of relative degree $m$ for system (\ref{eqn:affine}) if there exist $(m-i)^{th}$ order differentiable extended class $\mathcal{K}$ functions $\beta_i,i\in\{1,\dots,m-1\}$  and an extended class $\mathcal{K}$ function $\beta_{m}$ such that
	{\small\begin{equation}\label{eqn:cblfcons}
	\begin{aligned}
	\sup_{\bm u\in U}[L_f^{m}b(\bm x) \!+\! L_gL_f^{m-1}b(\bm x)\bm u \!+\! R(b(\bm x)) \!+\! p_m\beta_m(\psi_{m-1}(\bm x))] \geq 0,
	\end{aligned}
	\end{equation}
	}for all $\bm x\in \mathbb{R}^n$. In (\ref{eqn:cblfcons}), $R(\cdot)$ denotes the remaining Lie derivatives along $f$ with degree $< m$ (omitted for simplicity).
\end{definition}

We make the following assumption, which is not true in some cases (such as asymptotically growing functions). However, we will relax it in the next subsection.

\begin{assump} \label{asp:fake}
	If $\psi_{i-1}(\bm x(t)), i\in\{1,\dots,m\}$ is negative at time 0 and there exists a controller $\bm u(t)\in U$ that  makes it strictly increasing $\forall t\geq 0$, then, assume under this controller, $\psi_{i-1}(\bm x(t))$ will become non-negative in finite time.
\end{assump}

\begin{theorem} \label{thm:HOCLBF}
	Given a HOCLBF $b(\bm x)$ from Def. \ref{def:HOCLBF} with the associated sets $C_1, \dots, C_{m}$ defined by (\ref{eqn:sets}), if $\bm x(0) \in C_1 \cap,\dots,\cap C_{m}$, then any Lipschitz continuous controller $\bm u(t)$ that satisfies (\ref{eqn:cblfcons}), $\forall t\geq 0$ renders
	$C_1\cap,\dots, \cap C_{m}$ forward invariant for system (\ref{eqn:affine}). Otherwise, any Lipschitz continuous controller $\bm u(t)$ that satisfies (\ref{eqn:cblfcons}), $\forall t\geq 0$ stabilizes system (\ref{eqn:affine}) to the set $C_1\cap,\dots, \cap C_{m}$.
\end{theorem}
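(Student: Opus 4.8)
The plan is to split the argument along the two cases in the statement, treating the stabilization claim as a cascade of finite-time convergence results, one for each function $\psi_i$ in (\ref{eqn:pfunc}). For the forward-invariance case ($\bm x(0)\in C_1\cap\dots\cap C_m$), I would first note that satisfying (\ref{eqn:cblfcons}) is exactly the Lie-derivative form of $\psi_m(\bm x(t))\ge 0$. On the intersection $C_1\cap\dots\cap C_m$ every argument $\psi_{i-1}(\bm x)$ is non-negative, and each extended class $\mathcal{K}$ function $\beta_i$ restricted to $[0,\infty)$ is a class $\mathcal{K}$ function with $p_i\beta_i(0)=0$; hence, on this set, Def. \ref{def:HOCLBF} collapses to the HOCBF constraint of Def. \ref{def:hocbf}, and forward invariance of $C_1\cap\dots\cap C_m$ follows directly from Thm. \ref{thm:hocbf}. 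Equivalently, one can argue this by the comparison lemma: $\psi_m\ge 0$ gives $\dot\psi_{m-1}\ge -p_m\beta_m(\psi_{m-1})$, and since $0$ is an equilibrium of $\dot w=-p_m\beta_m(w)$, $\psi_{m-1}(0)\ge0$ forces $\psi_{m-1}(t)\ge0$; descending through $i=m-1,\dots,1$ with $\dot\psi_{i-1}=\psi_i-p_i\beta_i(\psi_{i-1})\ge -p_i\beta_i(\psi_{i-1})$ keeps every $\psi_{i-1}$ non-negative.

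For the stabilization case ($\bm x(0)\notin C_1\cap\dots\cap C_m$), I would establish by backward induction on $i$ that each $\psi_i$ becomes non-negative in finite time and stays non-negative thereafter. The base step uses the constraint directly: $\dot\psi_{m-1}\ge -p_m\beta_m(\psi_{m-1})$, so whenever $\psi_{m-1}<0$ the right-hand side is strictly positive (because $\beta_m$ is extended class $\mathcal{K}$ and $p_m>0$), making $\psi_{m-1}$ strictly increasing; Assumption \ref{asp:fake} then guarantees that $\psi_{m-1}$ reaches $0$ at some finite $T_{m-1}$, after which the comparison-lemma argument of the first case keeps it non-negative. For the inductive step, suppose $\psi_{i+1}(t)\ge0$ for all $t\ge T_{i+1}$; then for $t\ge T_{i+1}$ one has $\dot\psi_i=\psi_{i+1}-p_{i+1}\beta_{i+1}(\psi_i)\ge -p_{i+1}\beta_{i+1}(\psi_i)$, so the identical ``strictly increasing while negative, hence non-negative in finite time by Assumption \ref{asp:fake}, then forward invariant'' reasoning yields a finite $T_i\ge T_{i+1}$ with $\psi_i(t)\ge0$ for $t\ge T_i$. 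Taking $i=0$ shows $b(\bm x(t))=\psi_0(\bm x(t))\ge0$ and all $\psi_i(\bm x(t))\ge 0$ for $t\ge T_0$, i.e., $\bm x(t)\in C_1\cap\dots\cap C_m$ for $t\ge T_0$; invariance from the first case then keeps the trajectory in the set, which is precisely the notion of stabilization used in the paper.

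The main obstacle will be making the cascade rigorous across the transient phase before each $T_i$. While $\psi_{i+1}$ is still negative, $\psi_i$ is not controlled by the inductive hypothesis and could decrease; I would invoke the forward completeness of solutions assumed for (\ref{eqn:affine}) to rule out finite-time escape, so that $\psi_i(T_{i+1})$ is finite and the argument can restart at $T_{i+1}$. The other delicate point is the passage from ``strictly increasing'' to ``reaches zero in finite time'': this is exactly what Assumption \ref{asp:fake} supplies and cannot in general be obtained from the extended class $\mathcal{K}$ property alone, since an asymptotically growing $\beta_i$ would yield only asymptotic, not finite-time, convergence, which is why the theorem is stated under that assumption and relaxed afterwards. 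I would also flag the harmless edge case $p_i=0$, where $\psi_i$ inherits non-negativity directly from $\psi_{i+1}$ rather than from a strict increase.
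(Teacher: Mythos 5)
Your proposal is correct and follows essentially the same route as the paper: the invariance case is handled by observing that on $C_1\cap\dots\cap C_m$ the HOCLBF reduces to a HOCBF so that Thm.~\ref{thm:hocbf} applies, and the stabilization case is a recursive cascade in which $\psi_m(\bm x)\geq 0$ forces $\psi_{m-1}$ to be strictly increasing while negative, Assumption~\ref{asp:fake} supplies finite-time crossing of zero, and the argument descends to $\psi_0=b$. The only cosmetic difference is that you phrase each stage via the comparison lemma, whereas the paper packages it as a CLF $V=-\psi_{i-1}$ and cites the CLF convergence property; your version is, if anything, more explicit about the induction and the transient before each $\psi_i$ turns non-negative (though note that your closing remark about $p_i=0$ being harmless only holds for the invariance direction, since with $p_i=0$ one gets monotonicity but not convergence).
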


\begin{proof}:
{If  $\bm x\in C_1 \cap,\dots, \cap C_{m}$ for a HOCLBF $b(\bm x)$, then $b(\bm x)$ is also a HOCBF according to Def. \ref{def:hocbf}. It follows from Thm. \ref{thm:hocbf} that the set $C_1\cap,\dots, \cap C_{m}$ is forward invariant for system (\ref{eqn:affine}).} Otherwise, we can define a CLF $V = -\psi_{i-1}(\bm x(t))$ if $\psi_{i-1}(\bm x(t)) < 0, i\in\{1,\dots, m\}$ at time $t$. By the CLF property from \cite{Aaron2012}, if there exists a controller $\bm u(t)\in U$ that satisfies $\psi_{i}(\bm x(t))\geq 0, \forall t$, then system (\ref{eqn:affine}) will be stabilized to the set $C_i$ in (\ref{eqn:sets}). By Asumption \ref{asp:fake}, $\psi_{i-1}(\bm x(t))$ will become non-negative in finite time. This process is done recursively, and any Lipschitz continuous $\bm u(t)$ that satisfies (\ref{eqn:cblfcons}), $\forall t\geq 0$ stabilizes system (\ref{eqn:affine}) to the set $C_1\cap,\dots, \cap C_{m}$. \end{proof}

\subsection{Two Classes of HOCLBFs}
\label{sec:2bf}
In this subsection, we classify HOCLBFs into two classes: one that can achieve finite-time convergence (to a set defined by an arbitrary-relative-degree constraint) if a system is initially outside the set, which can help us relax Assumption \ref{asp:fake} in Thm. \ref{thm:HOCLBF}, and another one that enforces set forward invariance if a system is initially inside the set.

%To achieve the above two objectives, we study the definition of a HOCLBF. 
Since power functions are often used for class $\mathcal{K}$ functions, we consider extended class $\mathcal{K}$ functions as power functions. If $q_i = k$ or $q_i = \frac{1}{k}$, where $k\geq 1$ is an odd number, we rewrite (\ref{eqn:pfunc}) in the form:
\begin{equation}\label{eqn:exfunc}
    \psi_i(\bm x) = \dot \psi_{i-1}(\bm x) + p_i \psi_{i-1}^{q_i}(\bm x),
\end{equation}
where $p_i > 0, i\in\{1,\dots,m\}$. Otherwise, we have
\begin{equation}\label{eqn:exfunc2}
    \psi_i(\bm x) =\dot\psi_{i-1}(\bm x) + p_i sign(\psi_{i-1}(\bm x))|\psi_{i-1}(\bm x)|^{q_i}.
\end{equation}
 The sign and absolute value functions are used in the last equation to prevent $\psi_i(\bm x)$ from being an imaginary number when $\psi_{i-1}(\bm x) < 0$.

We only consider (\ref{eqn:exfunc}) in the section. The analysis for (\ref{eqn:exfunc2}) is similar, and thus is omitted. If $q_i \geq 1$, the next lemma shows the asymptotic convergence property of $\psi_{i-1}(\bm x)$ in a HOCLBF (we assume 0 is the initial time WLOG):

\begin{lemma} \label{lem:singleg1}
Given a HOCLBF $b(\bm x)$, if a controller $\bm u(t)\in U$ for (\ref{eqn:affine}) satisfies
\begin{equation} \label{eqn:ineqg1}
\dot \psi_{i-1}(\bm x(t)) + p_i \psi_{i-1}^{q_i}(\bm x(t)) \geq 0, \forall t\geq 0,
\end{equation} 
with $p_i > 0, q_i \geq 1, i\in\{1,\dots,m\}$ and $\psi_{i-1}(\bm x(0)) = \psi_{i-1}^0 \ne 0$, then there exists a lower bound for $\psi_{i-1}(\bm x(t))$, and the lower bound asymptotically approaches 0 as $t\rightarrow \infty$.
\end{lemma}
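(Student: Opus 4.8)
The plan is to reduce the differential inequality (\ref{eqn:ineqg1}) to a scalar comparison problem and then invoke the standard Comparison Lemma (Khalil \cite{Khalil2002}). Writing $y(t) := \psi_{i-1}(\bm x(t))$, the hypothesis reads $\dot y(t) \geq -p_i y^{q_i}(t)$ for all $t\geq 0$, with $y(0) = \psi_{i-1}^0 \neq 0$. I would introduce the autonomous comparison system $\dot z(t) = -p_i z^{q_i}(t)$, $z(0) = \psi_{i-1}^0$. In the power-function setting (\ref{eqn:exfunc}) with $q_i\geq 1$, the exponent $q_i=k$ is an odd integer, so the right-hand side $z\mapsto -p_i z^{q_i}$ is locally Lipschitz on all of $\mathbb{R}$; hence $z(t)$ is the unique solution and the Comparison Lemma gives $y(t)\geq z(t)$ for all $t\geq 0$. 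Thus $z(t)$ is exactly the lower bound claimed, and it remains only to show $z(t)\to 0$.

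The second step is to solve the comparison ODE in closed form and read off its limit. For $q_i=1$ I obtain $z(t)=\psi_{i-1}^0 e^{-p_i t}$, which converges monotonically to $0$ (from above if $\psi_{i-1}^0>0$, from below if $\psi_{i-1}^0<0$). For $q_i>1$, separation of variables yields $z(t)^{1-q_i} = (\psi_{i-1}^0)^{1-q_i} + (q_i-1)p_i t$; since $q_i$ is odd the exponent $q_i-1$ is even, so the left side equals $|z(t)|^{-(q_i-1)}$, which grows without bound as $t\to\infty$, forcing $|z(t)|\to 0$. In both cases $z=0$ is an equilibrium of the comparison flow, so $z(t)$ never changes sign and approaches $0$ through the same sign as $\psi_{i-1}^0$. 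This is the asymptotic convergence of the lower bound.

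I would organize the write-up by the sign of $\psi_{i-1}^0$, since the qualitative picture differs: when $\psi_{i-1}^0>0$ the lower bound is positive and decays to $0$, and when $\psi_{i-1}^0<0$ it is negative and rises to $0$ — in either case the set $C_i$ is only approached asymptotically, matching the ``never reaches the boundary in finite time'' remark in the text and motivating the separate finite-time result for $q_i<1$ in the following development. The main obstacle I anticipate is not the integration but a clean justification of the comparison step: I must confirm that the comparison solution exists for all $t\geq 0$ (it does, because the flow is directed toward the equilibrium $z=0$, ruling out finite-time escape), and that the Comparison Lemma is applied to $y(t)$ as a function of time along the closed-loop trajectory, so that the inequality need only hold pointwise in $t$ rather than uniformly in state. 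Once these two points are settled, the explicit bounds finish the proof.
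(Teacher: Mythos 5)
Your proposal is correct and follows essentially the same route as the paper's proof: solve the equality case $\dot z = -p_i z^{q_i}$ in closed form (exponential decay for $q_i=1$, the algebraic decay $z^{1-q_i}=(\psi_{i-1}^0)^{1-q_i}+p_i(q_i-1)t$ for odd $q_i>1$), then apply the comparison lemma from \cite{Khalil2002} to conclude $\psi_{i-1}(\bm x(t))\geq z(t)$ with $z(t)\to 0$ through the sign of $\psi_{i-1}^0$. Your added care about local Lipschitzness of the right-hand side and forward completeness of the comparison solution is a welcome tightening of details the paper leaves implicit, but it does not constitute a different argument.
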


\begin{proof}
By solving
\begin{equation}\label{eqn:diff}
\dot \psi_{i-1}(\bm x) + p_i \psi_{i-1}^{q_i}(\bm x) = 0,
\end{equation}
with $\psi_{i-1}(\bm x(0))\! =\! \psi_{i-1}^0 \!\ne\! 0$ and $q_i\! =\! 1$, we get
\begin{equation} \label{eqn:exp}
\psi_{i-1}(\bm x(t)) = \psi_{i-1}^0e^{-p_it}.
\end{equation}
In (\ref{eqn:exp}), $\psi_{i-1}(\bm x(t))$ will asymptotically approach 0 as $t\rightarrow +\infty$ for all $\psi_{i-1}^0 \ne 0$. However, $\psi_{i-1}(\bm x(t))$ is always negative if $\psi_{i-1}^0 < 0$, and is always positive if $\psi_{i-1}^0 > 0$.

If $q_i > 1$, the solution of (\ref{eqn:diff}) is given by
\begin{equation} \label{eqn:qne1}
\psi_{i-1}^{1-q_i}(\bm x(t)) = (\psi_{i-1}^0)^{1-q_i} - p_i(1-q_i)t.
\end{equation}
Note that in (\ref{eqn:qne1}), $(\psi_{i-1}^0)^{1-q_i}$ is always positive since $q_i > 1$ and $q_i$ is an odd number. Thus, the right-hand side of (\ref{eqn:qne1}) is always positive. Depending on the sign of $\psi_{i-1}^0$, we have
\begin{equation} \label{eqn:case2}
\psi_{i-1}(\bm x)\!:=\!\left\{
\begin{array}
[c]{rcl}%
-\frac{1}{((\psi_{i-1}^0)^{1-q_i} - p_i(1-q_i)t)^{\frac{1}{q_i-1}}}, \mbox{if $\psi_{i-1}^0 < 0$,}\\
\frac{1}{((\psi_{i-1}^0)^{1-q_i} - p_i(1-q_i)t)^{\frac{1}{q_i-1}}}, \mbox{if $\psi_{i-1}^0 > 0$}.
\end{array}
\right.
\end{equation}
In (\ref{eqn:case2}), $\psi_{i-1}(\bm x)$ will asymptotically get close to 0 from the positive side (if $\psi_{i-1}^0 > 0$) or from the negative side (if $\psi_{i-1}^0 < 0$) as $t\rightarrow \infty$.
Considering (\ref{eqn:ineqg1}) and using the comparison lemma in \cite{Khalil2002}, if $q_i = 1$, we have
\begin{equation} \label{eqn:case10}
\psi_{i-1}(\bm x)\geq\psi_{i-1}^0e^{-p_it}.
\end{equation}
 Otherwise,
\begin{equation} \label{eqn:case22}
\psi_{i-1}(\bm x)\!\geq\!\left\{
\begin{array}
[c]{rcl}%
-\frac{1}{((\psi_{i-1}^0)^{1-q_i} - p_i(1-q_i)t)^{\frac{1}{q_i-1}}}, \mbox{if $\psi_{i-1}^0 < 0$,}\\
\frac{1}{((\psi_{i-1}^0)^{1-q_i} - p_i(1-q_i)t)^{\frac{1}{q_i-1}}}, \mbox{if $\psi_{i-1}^0 > 0$}.
\end{array}
\right.
\end{equation}

It follows from (\ref{eqn:case10}) and (\ref{eqn:case22}) that if $q_i \geq 1$ and $\psi_{i-1}^0 > 0$ ($\psi_{i-1}^0 < 0$), the lower bound of $\psi_{i-1}(\bm x)$ is always positive (negative) and asymptotically approaches 0 as $t\rightarrow \infty$.
\end{proof}

Note that the extended class $\mathcal{K}$ function $p_i\psi_{i-1}^{q_i}(\bm x)$ in (\ref{eqn:exfunc}) is not Lipschitz continuous when $\psi_{i-1}(\bm x) = 0$ if $0< q_i < 1$. Then, we have the following lemma that demonstrates the finite-time convergence property of $\psi_{i-1}(\bm x)$ in a HOCLBF:
\begin{lemma} \label{lem:single}
Given a HOCLBF $b(\bm x)$, if a controller $\bm u(t)\in U$ for (\ref{eqn:affine}) satisfies (\ref{eqn:ineqg1})
with $p_i > 0, q_i \in (0,1), i\in\{1,\dots,m\}$ and $\psi_{i-1}(\bm x(0)) = \psi_{i-1}^0 \ne 0$, then there exists a lower bound for $\psi_{i-1}(\bm x)$, and the time at which this lower bound becomes 0 is $\frac{(\psi_{i-1}^0)^{1-q_i}}{p_i(1-q_i)}$.
\end{lemma}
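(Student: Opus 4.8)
The plan is to follow the same two-step strategy as in the proof of Lemma~\ref{lem:singleg1}: first solve the associated equality ODE (\ref{eqn:diff}) in closed form, and then invoke the comparison lemma of \cite{Khalil2002} to turn that closed-form solution into a lower bound for the trajectory that only satisfies the differential inequality (\ref{eqn:ineqg1}). The essential difference from Lemma~\ref{lem:singleg1} is the exponent: here $q_i\in(0,1)$, so $1-q_i>0$, and this is precisely what will force the reference solution to hit $0$ in finite time rather than decay asymptotically.

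First I would separate variables in (\ref{eqn:diff}). Since we are in the regime (\ref{eqn:exfunc}), $q_i=1/k$ with $k$ odd, so the odd-root power $\psi_{i-1}^{q_i}$ is real and sign-preserving, and $1-q_i=(k-1)/k$ has even numerator, whence $(\psi_{i-1}^0)^{1-q_i}>0$ irrespective of the sign of $\psi_{i-1}^0$. Integrating yields $\psi_{i-1}^{1-q_i}(\bm x(t))=(\psi_{i-1}^0)^{1-q_i}-p_i(1-q_i)t$; tracking the sign of $\psi_{i-1}$ explicitly, as was done in (\ref{eqn:case2}), gives the reference solution
\begin{equation}\label{eqn:refsol}
\psi_{i-1}^{\star}(t)=\mathrm{sign}(\psi_{i-1}^0)\left[(\psi_{i-1}^0)^{1-q_i}-p_i(1-q_i)t\right]^{\frac{1}{1-q_i}}.
\end{equation}
Because $1-q_i>0$ and $p_i>0$, the bracket decreases linearly from the positive value $(\psi_{i-1}^0)^{1-q_i}$ and reaches $0$ at $t^{\star}=\frac{(\psi_{i-1}^0)^{1-q_i}}{p_i(1-q_i)}$, so $\psi_{i-1}^{\star}(t^{\star})=0$ while $\psi_{i-1}^{\star}$ keeps the sign of $\psi_{i-1}^0$ on $[0,t^{\star})$. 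This already identifies the claimed time.

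Next I would apply the comparison lemma to (\ref{eqn:ineqg1}) to obtain $\psi_{i-1}(\bm x(t))\ge\psi_{i-1}^{\star}(t)$ on $[0,t^{\star}]$, so that (\ref{eqn:refsol}) is the asserted lower bound and it vanishes exactly at $t^{\star}$; I would split into the cases $\psi_{i-1}^0>0$ and $\psi_{i-1}^0<0$, paralleling the two branches in (\ref{eqn:case22}). The main obstacle is that the comparison function $-p_i\psi_{i-1}^{q_i}$ fails to be Lipschitz at $\psi_{i-1}=0$ (as noted just before the statement), so the standard comparison lemma does not apply across the instant of arrival. I would sidestep this by running the comparison argument on each interval $[0,t^{\star}-\varepsilon]$, where $\psi_{i-1}^{\star}$ is bounded away from $0$ and the right-hand side is locally Lipschitz, and then letting $\varepsilon\to 0$ and using continuity of $\psi_{i-1}(\bm x(\cdot))$ to extend the bound to $t^{\star}$. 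It is exactly this non-Lipschitz behaviour at the origin, equivalently the non-uniqueness of solutions of (\ref{eqn:diff}) at $0$, that permits finite-time convergence and so distinguishes this lemma from Lemma~\ref{lem:singleg1}.
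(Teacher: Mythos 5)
Your proposal is correct and follows essentially the same route as the paper: solve the equality ODE (\ref{eqn:diff}) in closed form as in (\ref{eqn:qne1})--(\ref{eqn:case1}), observe that the bracket $(\psi_{i-1}^0)^{1-q_i}-p_i(1-q_i)t$ vanishes at $t=\frac{(\psi_{i-1}^0)^{1-q_i}}{p_i(1-q_i)}$, and invoke the comparison lemma of \cite{Khalil2002} to convert this into the lower bound (\ref{eqn:case11}). Your additional $\varepsilon$-truncation to handle the non-Lipschitz right-hand side at $\psi_{i-1}=0$ is a point the paper's proof silently glosses over, and is a welcome tightening rather than a deviation.
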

\begin{proof}
Following (\ref{eqn:qne1}) and $0<q_i < 1$, we have
\begin{equation}\label{eqn:case1}
\psi_{i-1}(\bm x)\!:=\!\left\{
\begin{array}
[c]{rcl}%
\!-\!((\psi_{i-1}^0)^{1-q_i} \!-\! p_i(1\!-\!q_i)t)^{\frac{1}{1-q_i}}, \mbox{if $\psi_{i-1}^0 < 0$,}\\
((\psi_{i-1}^0)^{1-q_i} \!-\! p_i(1\!-\!q_i)t)^{\frac{1}{1-q_i}}, \mbox{if $\psi_{i-1}^0 > 0$}.
\end{array}
\right.
\end{equation}

In (\ref{eqn:case1}), the function $(\psi_{i-1}^0)^{1-q_i} - p_i(1-q_i)t$ will reach 0 at time $t = \frac{(\psi_{i-1}^0)^{1-q_i}}{p_i(1-q_i)}$ and becomes negative after this time instant. Therefore, the values of $\psi_{i-1}(\bm x)$ will be imaginary numbers after $t = \frac{(\psi_{i-1}^0)^{1-q_i}}{p_i(1-q_i)}$.

Using the comparison lemma in \cite{Khalil2002} and considering (\ref{eqn:ineqg1}), since $0<q_i < 1$, we have
\begin{equation}\label{eqn:case11}
\psi_{i-1}(\bm x)\!\geq\!\left\{
\begin{array}
[c]{rcl}%
\!-\!((\psi_{i-1}^0)^{1-q_i} \!-\! p_i(1\!-\!q_i)t)^{\frac{1}{1-q_i}}, \mbox{if $\psi_{i-1}^0 < 0$,}\\
((\psi_{i-1}^0)^{1-q_i} \!-\! p_i(1\!-\!q_i)t)^{\frac{1}{1-q_i}}, \mbox{if $\psi_{i-1}^0 > 0$}.
\end{array}
\right.
\end{equation}

Thus, the lower bound of $\psi_{i-1}(\bm x)$ will be zero at the time instant $t = \frac{(\psi_{i-1}^0)^{1-q_i}}{p_i(1-q_i)}$.
\end{proof}

Motivated by the properties from Lems. \ref{lem:singleg1} and \ref{lem:single}, we classify HOCLBFs into two classes: 
\begin{itemize}	
\item {\it Class 1}: if $\exists i\in\{1,\dots,m\}$, s. t. $0<q_i < 1$ in (\ref{eqn:exfunc}), 
\item {\it Class 2}: $q_i \geq 1, \forall i\in\{1,\dots,m\}$ in (\ref{eqn:exfunc}).
\end{itemize}

\begin{remark}
	({\bf Relationship between CLFs and Class 2 HOCLBFs}): We consider a set $C:=\{\bm x: b(\bm x)\geq 0\}$ where $b(\bm x)$ is a Class 2 HOCLBF with $m=1$. The HOCLBF constraint (\ref{eqn:cblfcons}) in this case is given by
	$
	\dot b(\bm x) + p_1b^{q_1}(\bm x)\geq 0, q_1 \geq 1, p_1 \geq 0.
	$
	Assuming $b(\bm x(0)) < 0$, we shrink the set $C$ to a point, i.e., $C:=\{\bm x: b(\bm x)= 0\}$.  Let $V(\bm x) := -b(\bm x) > 0$. We can rewrite the last equation as:
	$
	\dot V(\bm x) + p_1V^{q_1}(\bm x)\leq 0,
	$
	where the positive definite property of $V(\bm x)$ is guaranteed by the Class 2 HOCLBF as shown in (\ref{eqn:case10}) and (\ref{eqn:case22}) (note that $q_1$ is an odd number). This equation is equivalent to the CLF constraint defined in \cite{Aaron2012}.  Therefore, the  {\it Class 2} HOCLBF is more general than a CLF, and it works for high-relative-degree systems.
\end{remark}

Next, we continue to consider the {\it Class 1} HOCLBF to show its finite-time convergence property with the above lemmas. If $\psi_j(\bm x(t_i))\geq 0, \forall j\in\{i,\dots,m\}$, where $i\in\{1,\dots,m\}, t_i\geq 0$, then we can define $\psi_i(\bm x)$ as a HOCBF to guarantee that $\psi_j(\bm x(t))\geq 0, \forall j\in\{i,\dots,m\}, \forall t\geq t_i$ \cite{Xiao2019}. Thus, we assume that  a {\it Class 1} HOCLBF always defines $\psi_i(\bm x)$ to be a HOCBF if $\psi_j(\bm x(t_i))\geq 0, \forall j\in\{i,\dots,m\}$ as it better guarantees finite-time convergence.

Given a {\it Class 1} HOCLBF $b(\bm x)$ with $b(\bm x(0)) < 0$ and $\psi_{i}(\bm x(0)) = \psi_{i}^0 \in\mathbb{R}, i\in\{1,\dots,m-1\}$, we define
\begin{equation} \label{eqn:m0}
    m_0\!=\! \left\{\!\!\!\!\!\!\!\!\!
\begin{array}
[c]{rcl}%
&\mathop{\min}\limits_{i\in\{1,\ldots,m-1\}:\psi_{i}^0>0} i, \quad \mbox{if there exists $i$ s.t. $\psi_{i}^0>0$}\\
&m,\qquad\qquad\quad\;\; \mbox{otherwise}.
\end{array}
\right.
\end{equation}
In summary, if $i\leq m_0$, we choose $q_i\in(0,1)$ in (\ref{eqn:exfunc}); otherwise, we choose $q_i\geq 1$ for a {\it Class 1} HOCLBF.

 Let $t_i\geq0, i\in\{1,\dots,m\}$ denote the starting time instant when $\psi_j(\bm x(t_i))\geq 0, \forall j\in\{i,\dots,m\}$. Each $t_i$ depends on $\bm x(0)$ and $\bm u(t), t\geq 0$. The following theorem provides the finite-time convergence property of a {\it Class 1} HOCLBF:

\begin{theorem} \label{thm:time}
	Given a {\it Class 1} HOCLBF $b(\bm x)$ with $b(\bm x(0)) < 0$, any controller $\bm u(t)\in U$ that satisfies (\ref{eqn:cblfcons}) makes (\ref{eqn:affine}) converge to the set $C_1\cap\dots\cap C_m$ within time
	\begin{equation}
	t_{up} = \sum_{i=1}^{m_0}\frac{(\psi_{i-1}(\bm x(t_i)))^{1-q_i}}{p_i(1-q_i)}.
	\end{equation}
\end{theorem}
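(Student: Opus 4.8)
The plan is to prove $t_0 \le t_{up}$ by a two‑stage cascade over the nested functions $\psi_0,\dots,\psi_m$. The two facts I would set up first are: (i) any controller satisfying (\ref{eqn:cblfcons}) enforces $\psi_m(\bm x(t))\ge 0$ for all $t\ge 0$, since (\ref{eqn:cblfcons}) is exactly the Lie‑derivative form of $\psi_m\ge0$; and (ii) from the definition of $t_i$ as the first time all of $\psi_i,\dots,\psi_m$ are simultaneously non‑negative, we have $0=t_m\le t_{m-1}\le\dots\le t_1$, and the entering time of $C_1\cap\dots\cap C_m$ (all $\psi_0,\dots,\psi_{m-1}\ge0$) equals $t_0$, the first time $\psi_0,\dots,\psi_m\ge0$ (using $\psi_m\ge0$ always). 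It then suffices to bound $t_0$.

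\emph{Stage 1 (upper indices $i>m_0$): show $t_{m_0}=0$.} By the choice rule after (\ref{eqn:m0}), $q_i\ge1$ for $i>m_0$, and since $\psi_{m_0}^0>0$ the gains $p_i,\ i\in\{m_0+1,\dots,m\}$, are taken large enough that $\psi_{m_0}^0,\dots,\psi_{m-1}^0\ge0$, i.e. $\bm x(0)\in C_{m_0+1}\cap\dots\cap C_m$ (recall $C_{j+1}=\{\psi_j\ge0\}$). Treating $\psi_{m_0}$ as a HOCBF of relative degree $m-m_0$, whose HOCBF constraint coincides with (\ref{eqn:cblfcons}), Theorem~\ref{thm:hocbf} renders $C_{m_0+1}\cap\dots\cap C_m$ forward invariant, so $\psi_{m_0}(\bm x(t)),\dots,\psi_{m-1}(\bm x(t))\ge0$ for all $t\ge0$; hence $t_{m_0}=0$.

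\emph{Stage 2 (lower indices $i\le m_0$): telescope via Lemma~\ref{lem:single}.} For each $i$ from $m_0$ down to $1$ I would establish the one‑step bound
\[
t_{i-1}\le t_i+\frac{(\psi_{i-1}(\bm x(t_i)))^{1-q_i}}{p_i(1-q_i)} .
\]
At time $t_i$ all of $\psi_i,\dots,\psi_m$ are non‑negative, so by the same forward‑invariance argument as in Stage 1 (now with $\psi_i$ a HOCBF of relative degree $m-i$) they remain non‑negative for $t\ge t_i$; in particular $\psi_i(\bm x(t))\ge0$, which is precisely inequality (\ref{eqn:ineqg1}), $\dot\psi_{i-1}+p_i\psi_{i-1}^{q_i}\ge0$, on $[t_i,\infty)$. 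Since $q_i\in(0,1)$ for $i\le m_0$, Lemma~\ref{lem:single} (applied with initial time $t_i$ and initial value $\psi_{i-1}(\bm x(t_i))$) provides a lower bound on $\psi_{i-1}$ that reaches $0$ within the stated increment, so $\psi_{i-1}\ge0$ no later than that time, giving the bound on $t_{i-1}$. Telescoping these inequalities from $i=m_0$ to $i=1$ and using $t_{m_0}=0$ yields $t_0\le\sum_{i=1}^{m_0}\frac{(\psi_{i-1}(\bm x(t_i)))^{1-q_i}}{p_i(1-q_i)}=t_{up}$, which is the claim.

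The main obstacle I expect is Stage 1, i.e. rigorously securing $t_{m_0}=0$. Lemma~\ref{lem:singleg1} gives only asymptotic positivity and keeps $\psi_{i-1}$ non‑negative for all time only when its \emph{initial} value is non‑negative, so the upper‑index cascade genuinely requires $\psi_{m_0}^0,\dots,\psi_{m-1}^0\ge0$, not merely $\psi_{m_0}^0>0$. I would make this precise by arguing that these intermediate initial values can be forced non‑negative through the choice of the gains $p_i,\ i>m_0$ (which does not alter $m_0$, as $m_0$ depends only on $\psi_1^0,\dots,\psi_{m_0}^0$), and then invoking forward invariance for the entire upper block at once rather than index by index. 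A secondary point to verify is the well‑definedness and positivity of each increment $(\psi_{i-1}(\bm x(t_i)))^{1-q_i}$ when $\psi_{i-1}(\bm x(t_i))<0$: this relies on $q_i$ being the reciprocal of an odd integer, so that $1-q_i$ is an even numerator over an odd denominator, exactly as in the derivation of (\ref{eqn:case1}); the one‑step bound then holds regardless of the sign of $\psi_{i-1}(\bm x(t_i))$.
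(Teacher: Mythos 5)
Your proof follows essentially the same route as the paper's: split the cascade at $m_0$, secure the upper block $C_{m_0+1}\cap\dots\cap C_m$ by forward invariance (Thm.~\ref{thm:hocbf}) using the choice $q_j\ge 1$ for $j>m_0$, and then telescope the finite-time bounds of Lem.~\ref{lem:single} from $i=m_0$ down to $i=1$. You are in fact more explicit than the paper on the points you flag --- the need for $\psi_{m_0}^0,\dots,\psi_{m-1}^0\ge 0$ rather than merely $\psi_{m_0}^0>0$, and the well-definedness of $(\psi_{i-1}(\bm x(t_i)))^{1-q_i}$ for negative arguments --- both of which the paper's proof glosses over.
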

\begin{proof}
	If $m_0 = m$, then we have $p_i>0, q_i\in(0,1),\forall i\in\{1,\dots,m\}$, and a controller that satisfies the HOCLBF constraint (\ref{eqn:cblfcons}) will drive $\psi_{i-1}(\bm x)\geq 0$ from $i = m$ to $i=1$ following from Lem. \ref{lem:single}. Thus, we have that the time for $b(\bm x)=0$ is bounded by $\sum_{i=1}^{m}\frac{(\psi_{i-1}(\bm x(t_i)))^{1-q_i}}{p_i(1-q_i)}$. Otherwise, since $\psi_{m_0}^0 > 0$, by choosing proper $p_j>0, q_j\geq 1, j\in\{m_0,\dots,m\}$, we can get a non-empty set $C_{m_0+1}\cap,\dots,\cap C_m$ \cite{Xiao2019}. Then $\psi_{m_0}(\bm x)\geq 0$ is guaranteed by Thm. \ref{thm:hocbf}. By Lem. \ref{lem:single}, the upper bound time for each $\psi_{i-1}(\bm x) =0, i\in\{1,\dots,m_0\}$ in (\ref{eqn:exfunc}) is given by $\frac{(\psi_{i-1}(\bm x(t_i)))^{1-q_i}}{p_i(1-q_i)}$.  In this case, we need to recursively drive $\psi_{i-1}(\bm x)$ to 0 within time $\frac{(\psi_{i-1}(\bm x(t_i)))^{1-q_i}}{p_i(1-q_i)}$ from $i = m_0$ to $i=1$. Thus, the time for system (\ref{eqn:affine}) to converge to $C_1\cap\dots\cap C_m$ is upper bounded by $\sum_{i=1}^{m_0}\frac{(\psi_{i-1}(\bm x(t_i)))^{1-q_i}}{p_i(1-q_i)}$.
	\end{proof}

\begin{remark}
(Chattering in {\it Class 1} HOCLBFs) By Lemma \ref{lem:single}, we have that $\psi_{i-1}(\bm x)$ will go to zero within time $t = \frac{(\psi_{i-1}(\bm x(t_i)))^{1-q_i}}{p_i(1-q_i)}$ when $\psi_{i-1}(\bm x(t_i))$ is negative. This could also be true when (\ref{eqn:ineqg1}) becomes active if $\psi_{i-1}(\bm x(t_i))$ is positive, which is usually imposed by the state convergence requirement. After $\psi_{i-1}(\bm x)$ becomes zero, it will become positive (negative) if it is initially negative (positive) due to the continuity of the dynamics (\ref{eqn:affine}). However, $\psi_{i-1}(\bm x)$ may go to zero again after it becomes positive (negative) following from (\ref{eqn:case11}) as discussed earlier. Recursively, this may cause a {\it chattering} behavior.
\end{remark}

We can relax Assumption \ref{asp:fake} by defining a {\it Class 1} HOCLBF when $\bm x(0)\notin C_1\cap\dots\cap C_m$ since $\psi_{i-1}(x(t))$ will always cross the boundary $\psi_{i-1}(\bm x(t)) = 0$ in finite time when $\dot\psi_{i-1}(\bm x)> 0$, (a condition imposed by $\psi_{i}(\bm x(t)) \geq 0$ in Def. \ref{def:HOCLBF}).  After $\psi_{i-1}(\bm x)$ becomes positive, we can re-define an extended power class $\mathcal{K}$ function with $q_i \geq 1$ for $\psi_{i}(\bm x)$ in (\ref{eqn:exfunc}) in order to eliminate the chattering behavior. This switch process is discussed in the following remark.

\begin{remark} \label{rem:switch}
({\bf Switch from Class 1  to Class 2 HOCLBF}) A {\it Class 1} HOCLBF is defined when $b(\bm x(0)) \leq 0$, and the switch from a Class 1 to a Class 2 HOCLBF is performed as follows. Recall that $\psi_0(\bm x) = b(\bm x)$ in (\ref{eqn:pfunc}). If $\psi_{i-1}^0\leq 0, \forall i\in\{2,\dots,m\}$, we choose $p_i> 0, q_i\in(0,1)$ from $i = 1$ to $i = m$, and have the HOCLBF constraint (\ref{eqn:cblfcons}). $\psi_{i-1}(\bm x)$ will be non-negative from $i = m$ to $i = 1$ in finite time by Thm. \ref{thm:time}. Otherwise, we also choose $p_i> 0, q_i\in(0,1)$ starting from $i = 1$. Suppose $\psi_{i-1}^0 > 0$ at some $i\in\{2,\dots,m\}$. Then we can always choose $p_j > 0, q_j\geq 1,\forall j\in\{i,\dots,m\}$ such that  $C_{i}\cap,\dots,\cap C_m$ is non-empty \cite{Xiao2019}. It follows from Thm. \ref{thm:HOCLBF} that $\psi_{i-1}(\bm x) \geq 0$ is guaranteed if the HOCLBF constraint  (\ref{eqn:cblfcons}) is satisfied. Thus, $\psi_{i-2}(\bm x)$ will be positive in finite time following from Lem. \ref{lem:single} and the continuity of system (\ref{eqn:affine}). Once $\psi_{i-2}(\bm x)$ becomes positive, we change $q_{i-1}\in(0,1)$ to $q_{i-1}\geq 1$ and choose $p_{i-1}$ such that $C_{i-1}\cap, \dots,\cap C_{m}$ is non-empty. This is done recursively until $b(\bm x)> 0$. Eventually, we have $q_i\geq 1, \forall i\in \{1,\dots,m\}$. The {\it Class 1} HOCLBF switches to a {\it Class 2} HOCLBF.
\end{remark}

In a nutshell, we would like to define a {\it Class 1} HOCLBF when $b(\bm x(0)) \leq 0$ as the state of system (\ref{eqn:affine}) will converge to the set $C_1 \cap,\dots, \cap C_{m}$ without Assumption \ref{asp:fake} in finite time, and define a {\it Class 2} HOCLBF when $b(\bm x(0)) > 0$ in which case we can always define $C_i, i \in\{1,\dots,m\}$ such that $\bm x(0)\in C_1 \cap,\dots, \cap C_{m}$, as shown in \cite{Xiao2019}. Then the set $C_1 \cap,\dots, \cap C_{m}$ is forward invariant, as shown in Thm. \ref{thm:HOCLBF}. If we want $\psi_{i-1}(\bm x)$ to decrease to 0 slower, we can define a {\it Class 2} HOCLBF with large $q_i$ value, as shown in (\ref{eqn:case22}).

\subsection{HOCLBFs for STL Satisfaction}
\label{sec:stl_holbf}
In this section, we show how we can use HOCLBFs to guarantee the satisfaction of a STL formula. A STL formula can be decomposed into atomic formulae composed of $\mathcal{G}, \mathcal{F}$ operators, and each atomic formula is mapped to a constraint over the state of (\ref{eqn:affine}). Starting from time 0, we formulate a receding horizon $H > 0$, and only consider the atomic formulae that are in this horizon, i.e., $[t,t+H]\cap[t_a,t_b]\ne \emptyset$. If the constraint is satisfied at the current state, we can define a {\it Class 2} HOCLBF to make sure the predicate always stays true. The implementation is the same as for HOCBF, and thus is omitted. If this constraint is violated at the current state, we can use {\it Class 1} HOCLBFs to guarantee it to be satisfied within specified time. Once this constraint is satisfied, we switch to a {\it Class 2} HOCLBF as shown next.

{\bf Always atomic formula $\mathcal{G}$}: $\bm x\models\varphi$, where $\varphi:= \mathcal{G}_{[t_a,t_b]}(||\bm x(t) - \bm K|| \leq \xi)$, $\bm K\in\mathbb{R}^n, 0\leq t_a\leq t_b$, and $\xi > 0$, requires the trajectory $\bm x$ of system (\ref{eqn:affine}) to satisfy the quantified constraint:
\begin{equation} \label{eqn:satisf1}
\forall t\in[t_a,t_b], \;\;\;||\bm x(t) - \bm K|| \leq \xi.
\end{equation}

 Let $b(\bm x):= \xi - ||\bm x -\bm K||$, where $b(\bm x)$ has relative degree $m$ for system (\ref{eqn:affine}) and $b(\bm x(0)) < 0$. If we define $b(\bm x)$ to be a {\it Class 1} HOCLBF and choose $p_i > 0, q_i\in(0,1), i\in\{1,2,\dots, m_0\}$ to satisfy 
\begin{equation}\label{eqn:Geq}
t_{a} \geq \sum_{i=1}^{m_0}\frac{(\psi_{i-1}(\bm x(t_i)))^{1-q_i}}{p_i(1-q_i)},
\end{equation}
then the constraint (\ref{eqn:satisf1}) is guaranteed to be satisfied at $t_a$ following from Thm. \ref{thm:time} and is always satisfied after $t_a$ when we define $b(\bm x)$ to be a {\it Class 2} HOCLBF when (\ref{eqn:satisf1}) is satisfied to avoid chattering. We remove the HOCLBF $b(\bm x)$ after $t_b$. Thus, this atomic formula is guaranteed to be satisfied.
Since $\psi_{i-1}(\bm x(t_i)), i\in\{2,\dots,m\}$ depends on $p_j, q_j, \forall j\in[1,\dots,i]$, choosing $p_i, q_i$ to satisfy constraint (\ref{eqn:Geq}) is difficult. However, this can be easily solved if we define an Adaptive CBF (AdaCBF) \cite{Xiao2020} that makes $p_i, q_i$ time-varying (adaptive). In this paper, we provide a simple approach to choose $p_i, q_i$, i.e., we redefine $\psi_{i}(\bm x)$ in (\ref{eqn:exfunc}) as ($p_i > 0$):
\begin{equation}
\begin{aligned} \psi_i(\bm x) :=& \left\{ \begin{array}{lll} \dot\psi_{i-1}, \text{ if }i<m_0, \\ \dot\psi_{i-1}(\bm x) + p_i \psi_{i-1}^{q_i}, q_i\in(0,1), \text{ if }i=m_0, \\ \dot\psi_{i-1}(\bm x) + p_i \psi_{i-1}^{q_i}, q_i\geq 1, \text{ otherwise}. \end{array} \right.\\ \end{aligned} \label{eqn:gfunctions}%
\end{equation}

Now, $\psi_{m_0}(\bm x)$ in (\ref{eqn:gfunctions}) excludes $p_i, q_i, \forall i\in\{1,\dots,m_0 - 1\}$. We partition the time $[0,t_a]$ into $m_0$ intervals $\{t_1,\dots, t_{m_0}\}$ such that $\sum_{i = 1}^m t_i = t_a$. Each interval corresponds to the time necessary to drive $\psi_{i-1}(\bm x), i\in\{1,\dots,m_0\}$ in (\ref{eqn:gfunctions}) from negative to positive. We update $m_0\leftarrow m_0-1$ whenever $\psi_{m_0 - 1}(\bm x)> 0$, and then design each pair of $p_{m_0}, q_{m_0}$ according to Lem. \ref{lem:single} and the pre-partitioned time interval mentioned above. Each pair of $p_i, q_i$ is determined online, as summarized in Algo. \ref{alg:always}.
\begin{algorithm}
	\caption{The satisfaction of a $\mathcal{G}$ atomic formula} \label{alg:always}
	\KwIn{ $\varphi$, system (\ref{eqn:affine}) with initial state $\bm x(0)$}
	\KwOut{$\bm p,\bm q$}
	Determine $m_0$ by (\ref{eqn:m0})\;
	Partition $[0,t_a]$ into $m_0$ intervals $\{t_1,\dots, t_{m_0}\}$\;
	
	\While{$t\leq T$}{
		Determine $p_{m_0}, q_{m_0}$ according to Lem. \ref{lem:single} and each corresponding interval $t_{m_0}$\;
		Determine $p_i, q_i, i\in\{m_0+1,\dots,m\}$ by (\ref{eqn:gfunctions})\;
		Formulate a CBF-based OCP as in Prob. \ref{prob:tocp}\;
		\While{ $t\leq T$}
		{Solve the OCP in Prob. \ref{prob:tocp}\;
    		\If{$\psi_{m_0 - 1}(\bm x) > 0$}
    		{
    		$m_0\leftarrow m_0 - 1$ and break\;
    		}
		}
	}
\end{algorithm}

\textbf{Example revisited.} For the robot control problem in Sec. \ref{sec:pre}, consider formula $\varphi_1$, which corresponds to the constraint:
\begin{equation} \label{eqn:robotstl1}
\forall t\in[5s,6s], \;\;\;x^2(t) + y^2(t) \leq R^2. 
\end{equation}

Let $b(\bm x) = R^2 - x^2 - y^2$ be a {\it Class 1} HOCLBF. The initial condition of system (\ref{eqn:robot}) is given by $(0, -7.7,\frac{\pi}{4})$, $R = 4m, v = 1.732m/s$. We have $b(\bm x(0)) = -43.29$ and $\dot b(\bm x(0)) > 0$, and thus, $m_0 = 1$. If we choose $p_1 = 5, q_1 = \frac{1}{3}, t_1 = 4s$, then $\psi_1(\bm x(0)) = 1.3042 > 0$, and $t_1 > \frac{(b(\bm x(0)))^{1-q_1}}{p_1(1-q_1)}$ is satisfied. Thus, by Thm. \ref{thm:time}, the formula $\varphi_1$ is guaranteed to be satisfied.

{\bf Eventually atomic formula $\mathcal{F}$}: $\bm x\models\varphi$, where $\varphi:= \mathcal{F}_{[t_a,t_b]}(||\bm x(t) - \bm K|| \leq \xi)$, $\bm K\in\mathbb{R}^n, 0\leq t_a\leq t_b$, and $\xi > 0$, requires the trajectory $\bm x$ of system (\ref{eqn:affine}) to satisfy the quantified constraint:
\begin{equation} \label{eqn:satisf2}
\exists t\in[t_a,t_b], \;\;\;||\bm x(t) - \bm K|| \leq \xi.
\end{equation}

 Let $b(\bm x):= \xi - ||\bm x -\bm K||$, where $b(\bm x)$ has relative degree $m$ for system (\ref{eqn:affine}) and $b(\bm x(0)) < 0$. 
If we define $b(\bm x)$ to be a {\it Class 1} HOCLBF and choose $p_i > 0, q_i\in(0,1), i\in\{1,\dots, m_0\}$ to satisfy 
\begin{equation}
t_{b} \geq \sum_{i=1}^{m_0}\frac{(\psi_{i-1}(\bm x(t_i)))^{1-q_i}}{p_i(1-q_i)},
\end{equation}
then constraint (\ref{eqn:satisf2}) is guaranteed to be satisfied before $t_b$ following from Thm. \ref{thm:time}. If the predicate $b(\bm x(t))\geq 0$ is satisfied before $t_a$, then we will switch to a {\it Class 2} HOCLBF to make the predicate stay true. We remove the HOCLBF $b(\bm x)$ once the constraint (\ref{eqn:satisf2}) is satisfied for any time instant in $[t_a,t_b]$. In this way, this atomic formula is guaranteed to be satisfied. The approach to choose $p_i, q_i$ is similar to the $\mathcal{G}$ atomic formula.

{\bf Disjunction, conjunction, and Until formulae:}  For conjunctions of atomic formulae, we consider the corresponding HOCLBFs at the same time. We also consider the corresponding HOCLBFs at the same time for the disjunctions of atomic formulae. However, we will relax the one whose barrier function value is smaller when any two of the atomic formulae conflict and remove all the HOCLBFs once any one of these HOCLBFs is non-negative. Note that an Until formula $\mathcal{U}$ is a conjunction of $\mathcal{G}$ and $\mathcal{F}$ atomic formulae \cite{Maler2004}.

\textbf{Horizon $H$ and conflict predicates:} The horizon $H$ (see the description of the approach in Sec. \ref{sec:prob}) is chosen as large as possible given the available computation resources. While we define a HOCLBF for each atomic formula, it is likely that there will be conflict predicates among the predicates within $H$, which could make the problem infeasible. To address this, we relax the predicates in formulae with larger $t_a$, while minimizing the relaxation (relaxing a predicate $b(\bm x)\geq 0$ means relaxing the corresponding HOCLBF constraint (\ref{eqn:cblfcons}) by replacing 0 in the right-hand side with $\delta \in\mathbb{R}$ and adding $\delta^2$ to the cost).

\textbf{Solution to Problem \ref{prob:pre}:} The {\it Class 1} and {\it Class 2} HOCLBFs guaranteeing the satisfaction of the STL formula $\varphi$ are added as constraints to Problem \ref{prob:tocp}, which is solved using a sequence of QPs as described at the end of Sec. 
\ref{sec:hocbf}.

\section{Case Study}
\label{sec:sim}

Consider the unicycle described by Eqn. (\ref{eqn:robot}). The objective is to minimize the control effort:
$
	\min_{\bm u(t)} \int_{0}^{T} u^2(t) dt
$. The STL specification is given by
\begin{equation}
  \bm x\models (\varphi_1 \Rightarrow \varphi_2)\wedge( \varphi_0  \Rightarrow \varphi_3) \wedge \varphi_4 \wedge \varphi_5 \wedge \varphi_6\wedge\varphi_7,  
\end{equation}
where $\varphi_0 := (b_1(\bm x(0)) < 0), \varphi_1 := (b_1(\bm x(0)) \geq 0)$, $\varphi_2 =  G_{[0,t_b]}(b_1(\bm x)\geq 0)$, $\varphi_3 =  G_{[t_a,t_b]}(b_1(\bm x)\geq 0), \varphi_4 =F_{[t_c,t_d]}b_2(\bm x) \geq 0, \varphi_5 = G_{[t_e,T]}(b_3(\bm x)\geq 0), \varphi_6 = G_{[0,T]}(b_4(\bm x)\geq 0), \varphi_7 = G_{[0,T]}(b_5(\bm x)\geq 0), 0<t_a<t_b<t_c<t_d<t_e<T$, where
\begin{equation}\label{eqn:robotdst}
b_1(\bm x) := R_1^2 - x^2 - y^2 \geq 0,
\end{equation}
\begin{equation}\label{eqn:robotdes1}
b_2(\bm x) := \phi^2-(\theta - \theta_d)^2 \geq 0,
\end{equation}
\begin{equation}\label{eqn:robotdes2}
b_3(\bm x) :=  R_2^2 - (x + A_x)^2 + (y + A_y)^2 \geq 0,
\end{equation}
describe desired sets, with $R_1 > 0, R_2 > 0,\phi>0,\theta_d \in\mathbb{R}, A_x\in\mathbb{R},A_y\in\mathbb{R}$. Functions $b_4(\bm x)$ and $b_5(\bm x)$ describe two obstacles, i.e.,
\begin{equation}\label{eqn:robotdes3}
b_4(\bm x) :=  (x + O_{x,1})^2 + (y + O_{y,1})^2- R_3^2 \geq 0,
\end{equation}
\begin{equation}\label{eqn:robotobs}
b_5(\bm x) := (x + O_{x,2})^2 + (y + O_{y,2})^2- R_4^2 \geq 0,
\end{equation}
where {\small$R_3 > 0, R_4 > 0, (O_{x,1},O_{y,1})\in\mathbb{R}^2, (O_{x,2},O_{y,2})\in\mathbb{R}^2$.}

  In plain English, the STL specification states that, if the robot is initially in the set defined by constraint (\ref{eqn:robotdst}), then it should stay there for all times in the interval $[0,t_b]$. Otherwise, it should stay in this set for all times in $[t_a,t_b]$. The heading of the robot should be $\theta_d$ with error $\phi$ for at least a time instant in $[t_c,t_d]$, and the robot should stay in the set defined by the constraint (\ref{eqn:robotdes2}) for all times in $[t_e,T]$. The robot should always avoid the obstacles defined by (\ref{eqn:robotobs}) (\ref{eqn:robotdes3}).

 The control limitation is defined as:
$
u_{min}\leq u\leq u_{max},
$
where $u_{min}<0, u_{max}>0$.
Since the relative degrees of all the constraints (\ref{eqn:robotdst})-(\ref{eqn:robotobs}) with respect to (\ref{eqn:robot}) are 2, we define HOCLBFs with $m = 2$ to implement the STL specifications. We solve the OCP with the approach introduced at the end of Sec. \ref{sec:gcbf}.

We implemented the proposed algorithms in MATLAB. We used Quadprog to solve the QPs and ODE45 to integrate the robot dynamics. We first present simulations for initially violated constraints to study {\it Class 1} and {\it Class 2} HOCLBFs, and then present the complete solution to the OCP with STL specifications.

\subsection{Finite-time Convergence}

We consider the atomic formula $\varphi_3$ to study both {\it Class 1} and {\it Class 2} HOCLBFs for an initially violated constraint. The robot initial state is given by $(0,-7.7, \frac{\pi}{4}), v = 1.732m/s$, and is initially out of $C_1:=\{\bm x:b_1(\bm x)\geq 0\}$. Other simulation parameters are $t_b = 30s, \Delta t = 0.1, u_{max} = -u_{min} = 0.6rad/s, R = 4m$. We first define one {\it Class 1} HOCLBF and two {\it Class 2} HOCLBFs (linear and quadratic, respectively) for the constraint (\ref{eqn:robotdst}), and study the finite-time convergence under different $p_1, p_2$, respectively. The simulation results are shown in Fig. \ref{fig:stab1}.

\begin{figure}[thpb]
	\centering
	%\vspace{-3mm}
	\includegraphics[scale=0.5]{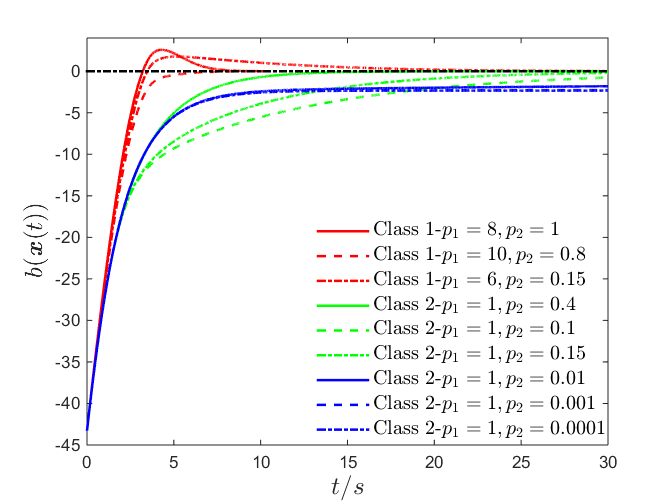}
	%\vspace{-3mm}
	\caption{Finite-time convergence (corresponding to $C_1$) of the system under different classes of HOCLBF. All the {\it Class 1} HOCLBFs have powers $q_1 =q_2= \frac{1}{3}$, and the  {\it Class 2} HOCLBFs (green and blue, respectively) have $q_1 =q_2= 1$ and $q_1= 1, q_2= 2$, respectively.}	
	\label{fig:stab1}
	%\vspace{-3mm}
\end{figure}

It follows from Fig. \ref{fig:stab1} that the robot can enter $C_1$ with {\it Class 1} HOCLBFs. In {\it Class 2} HOCLBFs, both $b(\bm x)$ and $\psi_1(\bm x)$ will asymptotically approach 0, and remain negative, i.e., the robot can never enter the set $C_1$. The convergence speed depends heavily on the penalties. The robot may even be stabilized to a distance that is far away from $C_1$ under high order power functions, as the blue lines shown in Fig. \ref{fig:stab1}. Note that after $b(\bm x)$ becomes positive for {\it Class 1} HOCLBFs, there will be chattering behaviors that could easily make the QP infeasible.

\subsection{Chattering Behavior}

We consider {\it Class 1} HOCLBFs to study chattering behaviors. The robot starts inside the set $C_1:=\{\bm x:b_1(\bm x)\geq 0\}$ with $\bm x(0) = (0,-3.7, 0), v = 1.732m/s$. The other settings are the same as in the last subsection. There would be chattering for the robot if we define a {\it Class 1} HOCLBF for the safety constraint (\ref{eqn:robotdst}), as the blue curves shown in Fig.  \ref{fig:control2}. In order to avoid chattering, we switch a {\it Class 1} HOCLBF to a {\it Class 2} HOCLBF, as shown in Remark \ref{rem:switch}. For the three {\it Class 1} HOCLBFs in Fig. \ref{fig:stab1}, we show the BF profiles with the switch method to avoid chattering in Fig. \ref{fig:switch}.

\begin{figure}[htbp]
	\centering
\hspace{-6mm}	\subfigure[Chattering behaviors.]{
		\begin{minipage}[t]{0.45\linewidth}
			\centering
			\includegraphics[scale=0.29]{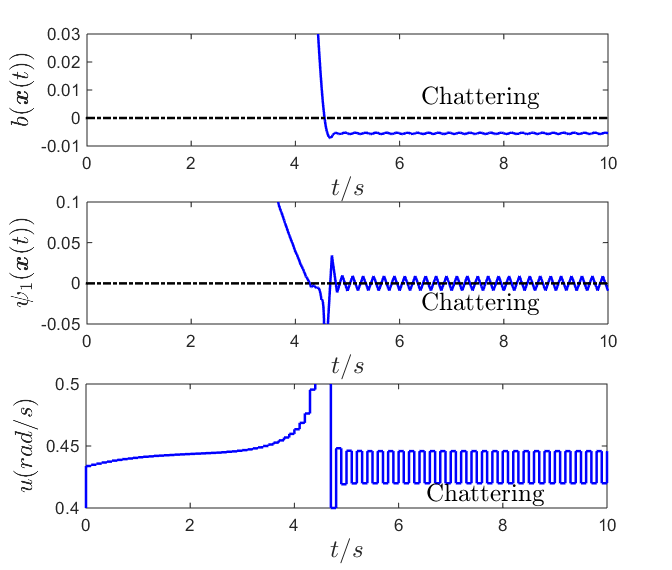} 
			\label{fig:control2}%
		\end{minipage}%
	}	
	\subfigure[The switch method. ]{
		\begin{minipage}[t]{0.45\linewidth}
			\centering
			\includegraphics[scale=0.33]{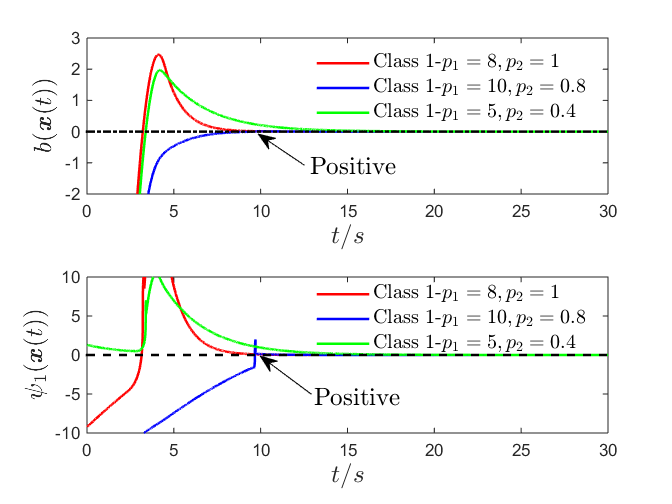} 
			\label{fig:switch}%
		\end{minipage}%
	}	
	
	\centering
	\caption{Chattering behaviors ($p_1 = 6, p_2 = 0.14, q_1 = q_2 = \frac{1}{3}$) and the switch method for {\it Class 1} HOCLBFs.}
	\vspace{-3mm}
\end{figure}

\subsection{Complete Solution}

For each atomic formula, we find the corresponding $p_1,p_2, q_2,q_2$ using the approach introduced in Sec. \ref{sec:stl_holbf}. The simulation parameters are $T = 32s, t_a = 4s, t_b = 5s, t_c = 7s, t_d = 9s, t_e = 21s, \Delta t= 0.1s, R_1 = 4m, R_2 = 4m, R_3 = 2m, R_4 = 3m, Ax = 10m, A_y = 10m, \phi = \frac{\pi}{12}, \theta_d =  \frac{5\pi}{4}, O_{x,1} = 8m, O_{y,1} = 4m, O_{x,2} = 10m, O_{y,2} = 10m, u_{\max} = -u_{\min} = 0.9rad/s, v = 1.732m/s, H = 10s$. The robot initial state is $(0,-7.7,\frac{\pi}{4})$.
\begin{figure}[thpb]
	\centering
	\vspace{-4mm}
	\includegraphics[scale=0.5]{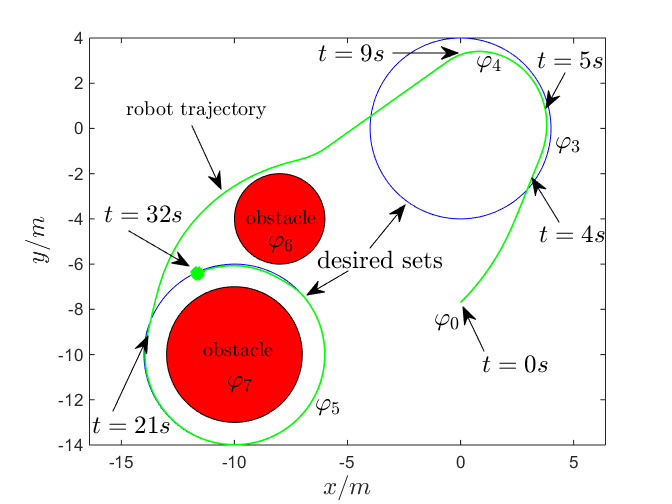}
	\vspace{-4mm}
	\caption{A trajectory that satisfies the STL specification with HOCLBFs.}
%	\vspace{-3mm}
	\label{fig:traj}
\end{figure}

We choose $q_1 =q_2 = \frac{1}{3}$ for all {\it Class 1} HOCLBFs, and choose $q_1 = q_2 = 1$ for all {\it Class 2} HOCLBFs.  Then we get $(p_1, p_2)$ with the approach introduced in Sec. \ref{sec:stl_holbf} as $(5,0.4), (0.8,N/A), (4.85,0.4)$ for the atomic formulae $\varphi_3, \varphi_4, \varphi_5$, respectively. Note that the relative degree of (\ref{eqn:robotdes1}) is one, so $\varphi_4$ only has $p_1$. The $p_1, p_2$ for $\varphi_6, \varphi_7$ are chosen according to the penalty method \cite{Xiao2019} such that the QP is feasible. When the {\it Class 1} HOCLBF constraint (desired set) conflicts with the {\it Class 2} HOCLBF constraint (safety), we relax the {\it Class 1} HOCLBF constraint. After this conflict disappears, we check whether the current $p_1, p_2$ can still satisfy the atomic formula or not. If not, we need to redefine $p_1,p_2$. The STL specification is guaranteed to be satisfied, as shown in Fig. \ref{fig:traj}.
\section{Conclusion}
\label{sec:conc}
We propose high order control Lyapunov-barrier functions (HOCLBF) that work for constraints with arbitrary relative degree and systems with arbitrary initial state. We show how the proposed HOCLBFs can be used to enforce the satisfaction of Signal Temporal Logic (STL) specifications.  Simulation results on a unicycle model demonstrate the effectiveness of the proposed method. Future work will focus on feasibility under tight control bounds and robust satisfaction of STL specifications.

\bibliographystyle{plain}
\bibliography{MCBF}

\end{document}